\newtheorem{definition}{Definition}
\newtheorem{proposition}{Proposition}
\newenvironment{proof}{\vspace{-1mm}\noindent{\bf {Proof.}}\ }{\hfill $\Box$\\}
\begin{document}

\title{Can rational choice guide us to correct {\em de se}
  beliefs?\thanks{This paper appears in {\em Synthese}, Volume 192, Issue
    12, pp.~4107-4119, December 2015.  The final publication is available
    at Springer via http://dx.doi.org/10.1007/s11229-015-0737-x}}
\author{Vincent Conitzer\\Duke University} 
\date{}

\maketitle

\begin{abstract}
  Significant controversy remains about what constitute correct
  self-locating beliefs in scenarios such as the Sleeping Beauty problem,
  with proponents on both the ``halfer'' and ``thirder'' sides.  To attempt
  to settle the issue, one natural approach consists in creating decision
  variants of the problem, determining what actions the various candidate
  beliefs prescribe, and assessing whether these actions are reasonable
  when we step back.  Dutch book arguments are a special case of this
  approach, but other Sleeping Beauty games have also been constructed to
  make similar points.  Building on a recent article (James R.~Shaw. {\em
    De se} belief and rational choice. {\em Synthese}, 190(3):491-508,
  2013), I show that in general we should be wary of such arguments,
  because unintuitive actions may result for reasons that are unrelated to
  the beliefs.  On the other hand, I show that, when we restrict our
  attention to {\em additive} games, then a thirder will necessarily
  maximize her {\em ex ante} expected payout, but a halfer in some cases
  will not (assuming causal decision theory).
  I conclude that this does not necessarily settle the issue and speculate about what might.\\
\noindent {\bf Keywords:} Sleeping Beauty, Dutch books, decision theory, game theory.
\end{abstract}

\section{Introduction}

The {\em Sleeping Beauty} problem~\citep{Elga00:Self} illustrates some
fundamental issues regarding self-locating beliefs.  In it, a study
participant referred to as ``Sleeping Beauty'' is put to sleep on Sunday,
and awoken either just on Monday, or on both Monday and Tuesday, according
to the outcome of a fair coin toss (Heads or Tails, respectively).  After
an awakening, she is put back to sleep {\em with her memory of the
  awakening event erased}, so that all awakenings are indistinguishable to
her.  When Beauty is awoken, what should be her credence (subjective
probability) that the coin came up Heads?  Some (``halfers'') argue that it
should be $1/2$.  The standard argument for this position is that this
should have been her credence in Heads before the experiment, and she has
learned nothing new, knowing all along that she would be awoken at least
once.  Others (``thirders'') argue that it should be $1/3$.  The standard
argument for this position is that if the experiment is repeated many
times, in the long run, only $1/3$ of awakenings correspond to a toss of
Heads.  (It should be emphasized that ``halfers'' and ``thirders'' would
compute other fractions on different examples, and ``halfing'' and
``thirding'' are supposed to refer to the methods of computing these
fractions rather than these specific values.)  For a summary of reasons why
philosophers are interested in the Sleeping Beauty problem,
see~\cite{Titelbaum13:Ten}.

One approach to settling what Beauty ought to believe is to design
scenarios where she must act on her beliefs, and to investigate the
consequences of being a thirder or a halfer on these actions.  One specific
line of attack within this general approach is to design {\em Dutch book}
arguments.  A Dutch book is a set of bets that an agent would all adopt
individually in spite of the fact that their combination will lead to a
guaranteed loss. If such can be constructed, this is an argument against
the rationality of the agent's beliefs.  In the context of the Sleeping
Beauty problem, the focus is on {\em diachronic} Dutch books, which involve
bets at different times.  Dutch book arguments for the Sleeping Beauty
problem are considered
by~\cite{Hitchcock04:Beauty},~\cite{Halpern06:Sleeping},~\cite{Draper08:Diachronic},~\cite{Briggs10:Putting},
and~\cite{Conitzer15:Dutch}.  These arguments generally favor thirding,
though it is sometimes also argued that a halfer can resist Dutch books,
particularly when adopting evidential decision theory.  \cite{Shaw13:De}
more generally pursues the agenda of integrating {\em de se} beliefs with
rational choice in the context of variants of the Sleeping Beauty problem.
He allows Beauty to play more complex games, and designs one where, he
argues, the thirder makes the wrong decision and the halfer makes the right
decision, regardless of whether they adopt causal or evidential decision
theory.

In this article, taking~\cite{Shaw13:De} as a starting point, I further
pursue the agenda of settling the correct answer to the Sleeping Beauty
problem by looking at the consequences of halfing and thirding on the
outcomes of associated decision problems.  I first sound a note of caution
by showing that in some cases unintuitive outcomes in these examples result
not from incorrect credences, but rather from challenges that a rational
actor faces when trying to coordinate with her past and future selves under
imperfect recall (at least under causal decision theory).  From examples
that involve such challenges, we cannot comfortably draw any conclusions
about the (in)correctness of a particular approach for computing credences.
Subsequently, I show that if we restrict the types of decision problem to
{\em additive} ones, which include typical Dutch book arguments, these
coordination challenges disappear; moreover, under causal decision theory,
a thirder will always make decisions that maximize her overall expected
payout, but a halfer in some cases does not.  I conclude by assessing how
much we can learn from these results about correct self-locating beliefs.

\section{Review: Shaw's Waking Game}

First, a review of Shaw's Waking Game is in order. He argues that thirders
get the wrong answer in this game while halfers get it right.  I focus here
on his analysis of a thirder who is a causal decision
theorist.\footnote{Throughout, unless otherwise noted, I will focus on
  causal decision theory.  Therefore, some of the conclusions I reach can
  be avoided by dismissing causal decision theory.  If the reader feels
  compelled to do so by the examples provided here, then that might be an
  even more significant impact for them to have---but I myself am not
  willing to go that far.}

{\bf Shaw's Waking Game.}  At the beginning of the experiment, Beauty is
informed of the rules of the game, which are as follows.  A fair coin will
be tossed; the outcome of this coin toss will not be revealed to Beauty
until the game is over. If it lands Heads, she will be woken up only once,
on Monday.  If it lands Tails, she will be woken up four times, on Monday,
Tuesday, Wednesday, and Thursday.  Each day, she will be asked to press
either Left of Right.  Her memory of the awakening will be erased
afterwards, she will not be able to take any notes, and the awakenings will
be indistinguishable.  She will be compensated as follows.
 \begin{enumerate}
 \item If Heads came up and she pressed Left, she will receive \$400.
 \item If Heads came up and she pressed Right, she will receive \$200.
 \item If Tails came up and she pressed Left on each of the four days, she
   will receive \$100.
 \item If Tails came up and she pressed Right on each of the four days, she
   will receive \$200.
 \item If Tails came up and she pressed Left on Monday and Right on at
   least one other day, she will receive \$200.
 \item If Tails came up and she pressed Right on Monday and Left on at
   least one other day, she will receive \$100.
 \end{enumerate}

 Shaw makes two assumptions that he calls {\em Randomizing Prohibited} and
 {\em Previous Runs}.  The meaning of the former is clear; the latter
 refers to the fact that Beauty, having seen many similar experiments
 performed on others, has become convinced that a subject always makes the
 same decision on each of her awakenings.  These imply the following, which
 is all that is needed for his analysis of the case of a thirder who is a
 causal decision theorist.

\begin{definition}
  Beauty is said to accept {\em Consistency in Other Rounds} if, upon any
  given awakening, she does not assign any credence to the following event:
  she has woken up or will wake up (with the same information) multiple
  additional times and did not/will not take the same action on each of
  those other occasions.
\end{definition}

Then, Shaw provides the following analysis.  If Beauty is a thirder and a
causal decision theorist, then upon an awakening, she should assign $1/5$
credence to Heads/Monday, $1/5$ to Tails/Monday, and $3/5$ to Tails/some
other day.  If she accepts {\em Consistency in Other Rounds}, then moreover
she believes that either (a) on all other awakenings (if any) she chooses
Left, or that (b) on all other awakenings she chooses Right.  Under (a), if
she chooses to now press Left, her expected payout will be
$$(1/5) \cdot \$400 + (1/5) \cdot \$100 + (3/5) \cdot \$100 = \$160$$
On the other hand, if she chooses to now press Right, her expected payout will be
$$(1/5) \cdot \$200 + (1/5) \cdot \$100 + (3/5) \cdot \$200 = \$180$$
Hence, under (a), she is better off pressing Right.

Under (b), if she chooses to now press Left, her expected payout will be
$$(1/5) \cdot \$400 + (1/5) \cdot \$200 + (3/5) \cdot \$100 = \$180$$
On the other hand, if she chooses to now press Right, her expected payout
will be
$$(1/5) \cdot \$200 + (1/5) \cdot \$200 + (3/5) \cdot \$200 = \$200$$
Hence, under (b), she is also better off pressing Right!  It follows that
Beauty, if she is a thirder and a causal decision theorist, will press
Right.

Now, because all awakenings are indistinguishable, she should {\em always}
press Right, resulting in a payout of \$200.  But always pressing Left
would have resulted in an expected value of \$250, which is better
(assuming Beauty is risk-neutral), and is hence the correct course of
action according to Shaw.  (He shows that a thirder who is an evidential
decision theorist also should choose Right in this example, but I will not
review this analysis here.)

\section{Three Awakenings}
\label{se:three}

Shaw's Waking Game is illuminating, but I believe little can be concluded
from it about whether thirding or halfing is correct.  To show why, let us
consider another example that shares key features of the reasoning above,
but without any coin tosses whatsoever.

{\bf Three awakenings.}  At the beginning of the experiment, Beauty is
informed of the rules of the game, which are as follows.  She will be woken
up exactly three times (Monday, Tuesday, and Wednesday).  Each day, she
will be asked to press either Left of Right.  Her memory of the awakening
will be erased afterwards, she will not be able to take any notes, and the
awakenings will be indistinguishable.  She will be compensated as follows.
 \begin{enumerate}
 \item If she never pressed Right, she will receive \$200.
 \item If she pressed Right once, she will receive \$300.
 \item If she pressed Right twice, she will receive \$0.
 \item If she pressed Right three times, she will receive \$100.
 \end{enumerate}
 Again, note that no coins are tossed at all in Three
 Awakenings.\footnote{In this sense, it is closer to the example of O'Leary
   awakening twice in his trunk~\citep{Stalnaker81:Indexical}, except that
   I need three rather than two awakenings.  Nevertheless, I will stick
   with the Beauty terminology for expository purposes, and will
   reintroduce coin tosses soon.}  The only uncertainties that Beauty faces
 are (1) which day it is and (2) what she herself has done and will do on
 the other days.  In fact, arguably, (1) does not even matter because in
 this game, all awakenings are treated symmetrically.  The key uncertainty
 is (2).

 How should Beauty act in this game?  If she always presses Left, she will
 obtain \$200; if she always presses Right, she will obtain only \$100.  So
 something is to be said for pressing Left.  However, upon any given
 awakening, Beauty can reason as follows.  There are two other rounds in
 which she has pressed or will press a button.  If she accepts {\em
   Consistency in Other Rounds}, then she believes that either (a) she has
 pressed or will press Left both other times or (b) she has pressed or will
 press Right both other times.  In case (a), she will be better off
 pressing Right this round, because pressing Right in only one round pays
 out \$300, whereas never pressing Right pays out \$200.  In case (b), she
 will {\em also} be better off pressing Right this round, because pressing
 Right in all three rounds pays out \$100, whereas pressing Right in only
 two pays out nothing.  So {\em in either case} Beauty is better off
 pressing Right, gaining \$100 from doing so!\footnote{Of course, to reason
   this way, Beauty must be a causal decision theorist; if she were an
   evidential decision theorist, then she would prefer to press Left and
   therefore believe that she presses Left in the other rounds as well.
   The example may thus provide some ammunition for evidential decision
   theorists, but again, I will attempt to steer clear of that debate here
   as much as possible.}  Then, because all awakenings are
 indistinguishable, it seems we should expect Beauty to press Right all the
 time---even though pressing Left all the time results in a higher payout.

 From Three Awakenings, it becomes clear that, under causal decision
 theory, actions that are locally optimal---at least when assuming {\em
   Consistency in Other Rounds}---can result in globally suboptimal
 outcomes, even in cases where there is no ambiguity about what the correct
 credences are.  (I take it to be uncontroversial that Beauty's credence
 upon awakening should be distributed uniformly ($1/3$, $1/3$, $1/3$)
 across Monday, Tuesday, and Wednesday.)  I believe the example also makes
 it clear that the total payout earned by a subject is a very unreliable
 indicator of the correctness of her credences.\footnote{One might, of
   course, argue that this is so only because we are using causal decision
   theory and causal decision theory is flawed.  Still, given the
   prominence of causal decision theory, I believe the example should leave
   us generally cautious about the strategy of using rational choice to
   determine what the correct credences are.}  To drive home the point,
 consider the following modification of Three Awakenings.
 
 {\bf Three Awakenings with a Coin Toss.}  The experiment now begins with a
 biased coin toss.  If it lands Heads (which happens 99\% of the time), we
 proceed with the original Three Awakenings game.  If it lands Tails (1\%),
 Beauty will similarly be woken up on Monday, Tuesday, and Wednesday, and
 asked to press Left or Right, but the payoffs will be different.  In fact,
 they will be much simpler: she will receive \$100 for each time she
 presses Left (and nothing for pressing Right).  As always, Beauty knows
 the setup of this modified game, but will not receive any evidence of how
 the coin landed until the game has ended.

 I take it to be uncontroversial that upon any awakening, Beauty should
 place a credence of 99\% on the event that the coin landed Heads, because
 whether the coin landed Heads or not, she will be awoken three times.
 Moreover, in all six possible awakening events, she will have the exact
 same information.  Given this, the modification is too slight to have an
 impact on her decision: for any given awakening, there is a 99\% chance
 that she will gain \$100 from pressing Right (assuming {\em Consistency in
   Other Rounds}) and a 1\% chance that she will lose \$100 from doing
 so---so she should still press Right.  But now, suppose that Beauty's
 credence is inexplicably inverted, so that she believes that there is a
 99\% chance that the coin came up {\em Tails}. If so, then from her
 perspective, now the simpler payoff function dominates and clearly she
 should press Left.  As a result, she will actually obtain a {\em larger}
 expected payout from the actual game, because always pressing Left results
 in a higher payout in Three Awakenings than always pressing Right.
 However, it seems clear that this should not lead us to believe that
 Beauty's inverted credence is in any sense {\em correct}; rather, she was
 just {\em lucky} that she accidentally inverted the credences, thereby
 escaping the detrimental reasoning to which understanding the game
 correctly would have led her.

 Of course, we do not need to go to such lengths to find examples where
 incorrect credences lead to a better result.  Someone who for some reason
 believes that in roulette Red comes up $2/3$ of the time, and bets on Red
 once for this reason only (as opposed to not betting at all), may well get
 lucky on that one spin of the wheel.  If so, nobody will argue that this
 {\em ex post} outcome implies that the credence of $2/3$ was correct.
 What is interesting about Three Awakenings with a Coin Toss is that any
 credences that maximize {\em ex ante} expected payoff are clearly
 incorrect.  It would seem that it is a very reasonable criterion for
 evaluating the correctness of credences to see whether they lead to the
 maximum {\em ex ante} expected payoff---but the example shows that this
 approach is, in general, problematic (at least if we are not willing to
 dismiss causal decision theory).

\section{Additive Games}

The examples in Section~\ref{se:three} suggest that in sufficiently rich
decision variants of the Sleeping Beauty problem, under causal decision
theory, the payouts that Beauty obtains do not provide useful guidance for
what her correct credences should be.  This is so because in such
scenarios, actions that are locally apparently rational may lead to
suboptimal payouts even when there can be no serious dispute about what the
correct credences should be.  But perhaps, if we restrict the space of
scenarios, we can avoid such issues.

The problematic aspect of the Three Awakenings game is that Beauty's
``three selves'' need to {\em coordinate} their actions to maximize
payout---the effect of one action on overall payout depends on the other
actions---and they fail to do so due to the lack of memory.  What happens
if we assume away this interdependence?  In what follows, I show that in
the resulting restricted class of games---additive games---Beauty does in
fact maximize her expected total payout by being a thirder (and a causal
decision theorist).  Of course, merely showing an example additive game
where being a thirder maximizes Beauty's expected total payout will do
little to prove the point, because for all we know there is another example
where being a thirder results in suboptimal payout.  I have to prove the
result at some level of generality for it to be more than merely
suggestive.  In particular, for the sake of generality, I wish to allow
that Beauty does not necessarily have the same experience in each awakening
(thereby allowing us to also address examples such as ``Technicolor
Beauty''~\citep{Titelbaum08:The}).  To do so, I will have to be a bit more
formal.

\begin{definition}
\label{def:additive}
A Sleeping Beauty decision variant with payoff function $\pi$ is {\em
  additive} if for every realization $r$ of the initial coin
toss,\footnote{We may assume without loss of generality that a single coin
  toss at the beginning provides all the randomness needed for the duration
  of the game, since we can keep as much of this randomness hidden from
  Beauty as we must, for as long as we must.  Indeed, it is commonly agreed
  that moving the coin toss between Sunday night and Monday night in the
  standard version of the Sleeping Beauty problem makes no difference.}
\begin{itemize}
\item {\bf (actions do not affect future rounds)} $r$ always leads to the same
  number $n_r$ of awakenings by Beauty regardless of Beauty's actions, and
  for every $i$ with $1 \leq i \leq n_r$, the information that Beauty
  possesses in the $i$th awakening depends only on $r$ and $i$, and not on
  Beauty's earlier actions; and
\item {\bf (payoff additivity)} for every $i$ with $1 \leq i \leq n_r$, and
  every two corresponding sequences of actions $a_1, \ldots, a_{n_r}$ and
  $a'_1, \ldots, a'_{n_r}$ that Beauty may take upon her $n_r$ awakenings,
  we have that $$\pi(r, a_1, \ldots, a_{n_r}) - \pi(r, a_1, \ldots, a_{i-1},
  a'_i, a_{i+1}, \ldots a_{n_r}) =$$ $$\pi(r, a'_1, \ldots, a'_{i-1}, a_i,
  a'_{i+1}, \ldots a'_{n_r}) - \pi(r, a'_1, \ldots, a'_{n_r})$$
\end{itemize}	
\end{definition}

Intuitively, in additive games, Beauty does not need to worry about
coordinating her actions with her selves from other awakenings.  This is
because by the first condition, the only effect of actions is directly on
the final payout (as opposed to them affecting the number of awakenings or
the information that she has in future rounds), and by the second condition
these effects on payout are independent across actions.  This intuition
leads to the following proposition.

\begin{proposition}
  If Beauty is a thirder and a causal decision theorist, and acts
  accordingly upon each individual awakening, then she will maximize her
  {\em ex ante} expected payout in additive games.  If she is a halfer and
  a causal decision theorist, and acts accordingly on each individual
  awakening, there are additive games in which she does not maximize her
  {\em ex ante} expected payout.
\label{prop:additive}
\end{proposition}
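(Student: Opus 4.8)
The plan is to first extract the algebraic content of payoff additivity, then to show that for a thirder the per-awakening causal choice coincides with \emph{ex ante} optimization, and finally to exhibit a small additive game that defeats the halfer. First I would observe that the payoff-additivity condition is exactly the discrete statement that, for each realization $r$, the marginal effect on $\pi$ of changing the action in awakening $i$ is independent of the actions taken in the other awakenings; since the two profiles quantified over in the definition are arbitrary, this forces the separable form $\pi(r, a_1, \ldots, a_{n_r}) = c_r + \sum_{i=1}^{n_r} f_{r,i}(a_i)$ for suitable constants $c_r$ and single-argument functions $f_{r,i}$. This is the structural fact that does all the work: the causal value of the action taken in any one awakening is just $f_{r,i}(a_i)$, independent of what Beauty does in her other awakenings, so the coordination problem that drove the Three Awakenings example cannot arise.

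Next, for the thirder, I would compare two quantities for each possible observation $o$ that an awakening can present. Because Beauty cannot distinguish awakenings sharing the same $o$, any \emph{ex ante} strategy is a map $\sigma$ from observations to actions, and by the separable form its expected payout is $\sum_r P(r) c_r + \sum_o \big( \sum_{(r,i):\, o_{r,i}=o} P(r)\, f_{r,i}(\sigma(o)) \big)$, where $o_{r,i}$ denotes the observation in awakening $i$ under $r$. Since the choices $\sigma(o)$ for distinct $o$ are independent, maximizing the \emph{ex ante} payout amounts to maximizing, separately for each $o$, the inner sum over the single action $\sigma(o)$. I would then note that the thirder's credence on situation $(r,i)$ given observation $o$ is precisely $P(r)$ normalized over the consistent situations, so her causal expected marginal payoff from an action $a$ is proportional to that same inner sum. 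Hence the action she selects upon seeing $o$ is one that maximizes the per-observation contribution to the \emph{ex ante} payout, and doing this for every $o$ maximizes the total; the grouping by $o$ is built in, so this also covers unequal-information cases such as Technicolor Beauty.

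For the halfer, I would build the minimal additive game: a fair coin, with Heads yielding one awakening and Tails two indistinguishable ones, where pressing Right contributes $a$ per awakening under Heads and $b$ per awakening under Tails while Left contributes $0$. Ex ante (and for the thirder) the comparison is $a + 2b$ versus $0$, whereas the halfer, assigning credence $1/2$ to Heads and $1/4$ to each Tails awakening, compares $a + b$ versus $0$. Choosing, say, $a = -15$ and $b = 10$ makes $a + 2b = 5 > 0$ but $a + b = -5 < 0$, so the halfer presses Left and forgoes a strictly positive expected gain, establishing the second claim.

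The main obstacle I expect is not the algebra but pinning down the decision-theoretic bookkeeping cleanly: I must argue that in an additive game a causal decision theorist's valuation of a current action really does reduce to its expected marginal contribution $\sum P_{\text{thirder}}(r,i \mid o)\, f_{r,i}(a)$, so that Consistency in Other Rounds and the attendant case split over her other choices become vacuous precisely because separability makes the marginal constant across those cases. A secondary care point is fixing the halfer's within-realization credences so that the counterexample is robust to the natural variants of the halfer position rather than turning on an idiosyncratic choice.
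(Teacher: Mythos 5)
Your proposal is correct and follows essentially the same route as the paper's proof: the same separable decomposition $\pi(r,a_1,\ldots,a_{n_r}) = c_r + \sum_i f_{r,i}(a_i)$ extracted from payoff additivity, the same regrouping of the \emph{ex ante} expectation by information set (your observations $o$), the same key observation that the thirder's credence in each awakening event is proportional to $P(r)$ so that per-awakening CDT optimization maximizes each information set's contribution, and a halfer counterexample of the same form as the paper's (standard Sleeping Beauty with opposed Heads/Tails payoffs, differing only in the numerical values). Your closing remark about robustness to how the halfer splits credence within the Tails awakenings is also handled identically in the paper, which notes that the counterexample does not depend on that split since the Right payoff is the same in both Tails awakenings.
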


\begin{proof}
  For each $r$ and $i$ with $1 \leq i \leq n_r$, let $v(r,i)$ correspond to
  the awakening event on the $i$th day after a coin toss realization of
  $r$.  Let $V = \bigcup_{(r,i): 1 \leq i \leq n_r} \{v(r,i)\}$ be the set
  of all awakening events.  By payoff additivity, we can construct, for
  every $v \in V$, a function $\pi_v$ such that Beauty's total payout upon
  coin toss realization $r$ and actions $a_1, \ldots, a_{n_r}$ is
  $c(r) + \sum_{i \in \{1,\ldots,n_r\}} \pi_{v(r,i)}(a_i)$, where $c(r)$ is
  a constant that we may ignore for the purpose of acting
  optimally.\footnote{To be specific, we can choose, for every $v$, a
    default action $d_v$.  Let $r(v)$ denote the coin toss realization that
    leads to $v$.  Then, for any action $a_v$ that can be taken at $v$, we
    let
    $\pi_v(a_v) = \pi(r(v), d_{v(r,1)}, \ldots, d_{v(r,i-1)}, a_v,
    d_{v(r,i+1)}, \ldots, d_{v(r,n_r)}) - \pi(r(v), d_{v(r,1)}, \ldots,
    d_{v(r,i-1)}, d_{v}, d_{v(r,i+1)}, \ldots, d_{v(r,n_r)})$,
    where $v=v(r,i)$.  By payoff additivity it then follows that
    $\pi(r, a_1, \ldots, a_{n_r}) = \pi_{v(r,1)}(a_1) + \pi(r, d_{v(r,1)},
    a_2, \ldots, a_{n_r}) = \pi_{v(r,1)}(a_1) + \pi_{v(r,2)}(a_2) + \pi(r,
    d_{v(r,1)}, d_{v(r,2)}, a_3, \ldots, a_{n_r}) = \ldots = (\sum_{i \in
      \{1,\ldots,n_r\}} \pi_{v(r,i)}(a_i)) + \pi(r(v), d_{v(r,1)}, \ldots,
    d_{v(r,n_r)})$,
    so we can set $c(r)=\pi(r(v), d_{v(r,1)}, \ldots, d_{v(r,n_r)})$. (It
    is easy to see that conversely the existence of such $\pi_v(\cdot)$
    implies payoff additivity.)}  We will use $I \subseteq V$ to denote an
  {\em information set}, i.e., a set of awakening events that Beauty cannot
  distinguish.\footnote{Note that one awakening event corresponds to many
    nodes in the standard extensive-form representation of the game---one
    for each sequence of actions that Beauty has taken so far.  However,
    because of the ``actions do not affect future rounds'' condition, all
    these nodes must lie in the same information set.}  Note that two
  awakening events in the same information set may correspond either to the
  same coin toss realization $r$---e.g., subsequent Monday and Tuesday
  awakenings in the standard version of Sleeping Beauty---or to different
  coin toss realizations---e.g., the two Monday awakening events
  corresponding to Heads and Tails in the standard version.  When Beauty
  awakens in information set $I$, if she is a thirder, then her credence
  that the realization of the coin toss is $r$ is given by
  $P(r | I) = \frac{P(r) \cdot \nu(I,r)}{\sum_{r'}P(r') \cdot \nu(I,r')}$,
  where $\nu(I,r) = |\{v \in I: r(v) =r\}|$ is the number of times that
  Beauty will awaken with information $I$ after coin toss realization $r$
  and $r(v)$ is the realization that leads to $v$.  (This is the essence of
  being a thirder: given particular information upon awakening, credence in
  a particular realization is proportional to the number of times one will
  awaken with this information under this realization.  Indeed, if the
  experiment is repeated many times, then $P(r | I)$ gives the long-run
  fraction of the awakenings in information set $I$ that corresponded to a
  coin toss realization of $r$.)  Moreover, the credence that she assigns
  to a specific $v \in I$ with $r(v)=r$ is
  $P(v|I)=\frac{P(r|I)}{\nu(I,r)} = \frac{P(r)}{\sum_{r'}P(r') \cdot
    \nu(I,r')}$.
  Hence, if $A_I$ is the set of actions available to her in information set
  $I$,\footnote{Note that an agent cannot have different sets of actions
    available to her in two awakening events that are in the same
    information set, because then she would be able to rule out some of the
    awakening events in the information set based on the actions available
    to her.  Some Dutch book arguments are flawed because they violate this
    criterion.} she will choose some $a_I \in A_I$ that maximizes
  $\sum_{v \in I} P(v|I) \pi_v(a_I)$.

  If Beauty takes action $a_I \in A_I$ whenever she is in information set
  $I$, then her {\em ex ante} expected payout overall is
  $\sum_r P(r) \sum_{i \in \{1,\ldots,n_r\}} \pi_{v(r,i)}(a_{I(v(r,i))})$
  (where $I(v)$ is the information set in which $v$ lies).  Rearranging,
  this is equal to $\sum_I \sum_{v \in I} P(r(v)) \pi_v(a_I)$.\footnote{To
    see this, note that the first summation sums over all $v$ by first
    summing over all $r$ and then over all $v$ corresponding to that $r$.
    The second summation also sums over all $v$, but instead by first
    summing over all information sets and then over all $v$ in that
    information set.  In both cases, the summand for $v$ is
    $P(r(v))\pi_v(a_{I(v)})$.}  We will show that if Beauty is a thirder
  and a causal decision theorist, then in fact for every $I$ she maximizes
  $\sum_{v \in I} P(r(v)) \pi_v(a_I)$, thereby establishing that she
  maximizes her {\em ex ante} expected payout overall.  Indeed, we have
  already established that for each $I$, Beauty maximizes
  $\sum_{v \in I} P(v|I) \pi_v(a_I)$.  Using that (Beauty being a thirder)
  $P(v|I) = \frac{P(r(v))}{\sum_{r'}P(r') \cdot \nu(I,r')}$, we obtain that
  Beauty maximizes
  $\frac{\sum_{v \in I} P(r(v)) \pi_v(a_I)}{\sum_{r'}P(r') \cdot
    \nu(I,r')}$.
  Because Beauty cannot affect the denominator of this expression, this is
  equivalent to maximizing $\sum_{v \in I} P(r(v)) \pi_v(a_I)$, as was to
  be shown.

  On the other hand, if Beauty is a halfer (and a causal decision
  theorist), then consider the standard Sleeping Beauty game, where a coin
  is tossed to determine whether she awakens once (upon Heads) or twice,
  with all her three possible awakenings in the same information set. Let
  her choose between Left and Right upon each awakening.  If the awakening
  is one corresponding to Heads, she will receive $3$ for choosing Left
  (and $0$ for Right); if it is one corresponding to Tails, she will
  receive $2$ for choosing Right (and $0$ for Left).  If Beauty is a halfer
  (and a causal decision theorist), upon awakening she will think it
  equally likely that she is in a Heads awakening and that she is in a
  Tails awakening, and therefore will choose Left for an expected payoff of
  $3/2$ in this round (rather than Right for $1$).  However, overall,
  choosing Right every time gives an {\em ex ante} expected total payout of
  $(1/2) \cdot 2 \cdot 2 = 2$, whereas choosing Left every time gives an
  {\em ex ante} expected total payout of $(1/2) \cdot 1 \cdot 3 = 3/2$, so
  Beauty fails to maximize her expected payout.
\end{proof}
Intuitively, the way the proof works is as follows.  Because the game is
additive, we can separate Beauty's total {\em ex ante} expected payoff into
the contributions made to it by individual information sets $I$.  It then
remains to show that Beauty maximizes her expected payoff for each
information set $I$ if she is a thirder and a causal decision theorist.
Now, the contribution of each individual awakening event $v$ within the
information set $I$ to the expected payoff is proportional to the
probability $P(r(v))$ of the coin toss realization $r(v)$ that gives rise
to $v$.  But, when she is in $I$, Beauty's credence $P(v|I)$ in $v$ is {\em
  also} proportional to $P(r(v))$. This is so because (being a thirder) her
credence $P(r(v)|I)$ in $r(v)$ is proportional to $P(r(v))\nu(I,r(v))$,
where $\nu(I,r(v))$ is the number of awakening events in $I$, across which
this credence is equally divided.  Because of this, Beauty weighs the
awakening events in an information set exactly so as to maximize {\em ex
  ante} expected payoff.  In contrast, if she is a halfer and a causal
decision theorist, then her credence in $r(v)$ is not proportional to
$P(r(v))\nu(I,r(v))$ but rather just to $P(r(v))$, so that her credence in
$v$ itself is proportional to $P(r(v)) / \nu(I,r(v))$.\footnote{At least,
  it would appear natural to split the credence equally across these
  $\nu(I,r(v))$ awakening events---but note that the counterexample does
  not actually rely on this.}  As a result, she places too little weight on
awakening events $v$ in $I$ that correspond to coin toss outcomes $r$ that
lead to many other awakening events in $I$.  This is what leads her to
decide suboptimally in the counterexample at the end of the proof: she
insufficiently weighs the Tails awakenings in making her
decision.\footnote{\label{fo:evidential}One may wonder whether, along the
  lines of~\cite{Briggs10:Putting}, the halfer could correct for this by
  adopting evidential decision theory instead.  The idea would be that her
  decision provides evidence for what she does in {\em all} the
  $\nu(I,r(v))$ awakenings, thereby undoing the problematic division by
  $\nu(I,r(v))$ above.  Unfortunately, if she adopts evidential decision
  theory, then in general her decision will also provide evidence about
  what she does in {\em other} information sets (especially, very similar
  ones) and this prevents the proof from going through.  To illustrate,
  consider the following example (an additive game).  We toss a three-sided
  coin (Heads, Tails, and Edge with probability 1/3 each).  On Heads,
  Beauty will be awakened once in information set $I_1$; on Tails, once in
  information set $I_2$; on Edge, once in $I_1$ {\bf and} once in $I_2$. On
  every awakening, Beauty must choose Left or Right.  If the world is Heads
  or Tails, Left pays out 3 and Right 0; if it's Edge, Left pays out 0 and
  Right 2.  Note that $I_1$ and $I_2$ are completely symmetric.  The
  optimal thing to do from the perspective of {\em ex ante} expected payout
  is to always play Left (and get $(2/3) \cdot 3$ rather than
  $(1/3) \cdot 2 \cdot 2$ from Right).  What will the EDT halfer do?  Upon
  awakening in (say) $I_1$, she will assign credence $1/2$ to each of Heads
  and Edge (and $0$ to Tails).  (In fact, some variants of halfing will
  result in different credences; to address such a variant, we can modify
  the example by adding another awakening in both Heads and Tails---but not
  Edge---worlds, in an information set $I_3$ where no action is taken.  All
  variants of halfing---and, for that matter, thirding---of which I am
  aware will result in the desired credences of $1/2$ Heads, $1/2$ Edge in
  this modified example.)  Now, the key point is that if she plays Right
  (Left) now, this is very strong evidence that she would play Right (Left)
  in $I_2$ as well---after all the situation is entirely symmetric.  Thus,
  conditional on playing Left, she will expect to get $3$ in the Heads
  world and $0$ in the Edge world; conditional on playing Right, she will
  expect to get $0$ in the Heads world and $2 \cdot 2=4$ in the Edge world.
  Hence she will choose Right (and by symmetry she will also choose Right
  in $I_2$), which does not maximize {\em ex ante} expected payout.
  \cite{Conitzer15:Dutch} provides a more elaborate example along these
  lines in the form of a Dutch book to which evidential decision theorists
  fall prey, along with further discussion.  (Incidentally, an evidential
  decision theorist who is a {\em thirder} fails to maximize {\em ex ante}
  expected payoff on a much simpler example: in the counterexample at the
  end of the proof of Proposition~\ref{prop:additive}, just change the
  payoff for choosing Left on Heads to $5$.  Now Left maximizes {\em ex
    ante} expected payoff, but an evidential decision theorist who is a
  thirder will calculate $(1/3)\cdot 5 = 5/3 < 8/3 = (2/3)\cdot 2 \cdot 2$
  and choose Right.  What goes wrong is that $\nu(I,\text{Tails})=2$ now
  occurs {\em twice} on the right-hand side, once due to thirding ($2/3$)
  and once due to evidential decision theory (the second $2$; the third $2$
  is the payoff for choosing Right on Tails). I thank an anonymous reviewer
  for providing this counterexample.  It should also be noted
  that~\cite{Briggs10:Putting} already gives a Dutch book for an evidential
  decision theorist who is a thirder.)}

Proposition~\ref{prop:additive} also implies that Beauty, if she is a
thirder and a causal decision theorist, is invulnerable to certain types of
Dutch books.  (This is already discussed in prior
work~\citep{Hitchcock04:Beauty,Draper08:Diachronic,Briggs10:Putting}.)
Specifically, she will not fall for a Dutch book as long as: (a) Beauty, at
the beginning of the experiment, is made aware of the bets she will be
offered in different awakening states and will not forget this; (b)
Beauty's betting actions affect neither her future awakening states nor the
outcomes of past or future bets; (c) for every two states in the same
information set, the bet posed to Beauty is the same.  Here, (c) seems
natural, because if two states in the same information set were to have
different bets associated with them, then in fact, by (a), they would allow
Beauty to distinguish between them before she takes her action,
contradicting that they are in the same information set.  (a) and (c)
together ensure that we can interpret the bets as Beauty playing a game
(whose rules she knows), and adding (b) ensures that this game is additive.
(Note that we may have to add an initial round to correspond to a bet at
the beginning of the experiment.)  By the first part of
Proposition~\ref{prop:additive}, Beauty will act in a way that maximizes
her expected payout.  This means she cannot be vulnerable to accepting a
set of bets that results in a sure loss, because if she did so she would
not be maximizing her expected payout (since, after all, she can also
accept none of the bets at all and thereby avoid a loss).  Given all this,
the second part of Proposition~\ref{prop:additive} is unsurprising in light
of the Dutch book given by~\cite{Hitchcock04:Beauty} for halfers that use
causal decision theory.

One may wonder whether additive games are really the ``right'' class of
games to which to restrict our attention.  Perhaps the result can be
generalized to a somewhat broader class of games, for example by slightly
relaxing the first condition in Definition~\ref{def:additive}.\footnote{It
  should be noted that doing so appears nontrivial.  For example, suppose
  we continue to insist that the number of awakenings depends only on the
  outcome of the coin toss, but we attempt to relax the requirement that
  actions do not affect the information that Beauty has in future
  awakenings.  Then, an action's value may come less from the payoff
  resulting directly from it and more from allowing Beauty to obtain
  increased payoffs in later rounds by improving her information.  It is
  possible that these latter, indirect effects on payoffs are not additive
  even when the direct payoffs are additive (so that payoff additivity is
  technically satisfied), and that this would still allow us to embed
  problematic examples such as the Three Awakenings game.}  Such a
generalization, of course, would only strengthen the point.  More
problematically, perhaps a different natural class of games would actually
favor halfing.  I cannot rule out this possibility, but it seems unlikely
to me that such a class would be more compelling than that of additive
games.  I believe that additive games are well motivated by the discussion
given at the beginning of this section about removing the coordination
problem between Beauty's multiple selves, and the fact that the result
provides a corollary about Dutch books is also encouraging.

\section{Conclusion}

What can we conclude from the foregoing?  First and foremost, the Three
Awakenings game shows that we should be very cautious when drawing
conclusions about halfing vs.~thirding from the outcomes of
decision-theoretic variants of the Sleeping Beauty problem.  I do believe
that Proposition~\ref{prop:additive} shows some merit to being a thirder
rather than a halfer, but surely it does not settle the matter once and for
all.  One might well argue, for example, that, once she has awakened under
particular circumstances, Beauty should no longer care whether she
maximizes her {\em ex ante} expected payout; instead, she should maximize
her expected payout with respect to her beliefs at hand.  These two
objectives turn out to be aligned in the case of a (causal decision
theorist) thirder in additive games, and this may be a nice property.  But
the battle-hardened halfer is likely more comfortable biting the bullet and
accepting nonalignment in these two objectives than giving up on other
cherished philosophical commitments.  Another possibility for the halfer
may be to embrace a version of evidential decision theory instead.  More
discussion of how halfers may or may not benefit from adopting evidential
decision theory, particularly in the context of Dutch book arguments, is
given
by~\cite{Arntzenius02:Reflections},~\cite{Draper08:Diachronic},~\cite{Briggs10:Putting},
and~\cite{Conitzer15:Dutch} (see also the discussion in
Footnote~\ref{fo:evidential}).

How could we create decision variants of the Sleeping Beauty problem that
leave no ambiguity about whether rational decisions truly correspond to
rational beliefs?  One way to do so would be to consider a {\em myopic}
Beauty.  Such a Beauty would be rewarded immediately after taking an action
in the game, rather than at the end.  We may suppose that she is rewarded
in something giving immediate satisfaction---say, chocolate---rather than
money.  Moreover, she is assumed to care only about the very near future;
tomorrow is too far in the future to affect her decisions.  Her being
myopic is not to be understood as her being irrational. We still assume her
to be entirely rational, but she just discounts the future exceptionally
heavily (and, to the extent it matters, the past as well).  Such a Beauty,
in a simple variant (without decisions) where she is certainly woken up on
both Monday and Tuesday but given chocolate only on Tuesday, will hope that
{\em today is Tuesday} when she is awoken.\footnote{Perhaps such examples
  are more palatable when we consider variants of the Sleeping Beauty
  problem that involve clones---see, e.g.,~\cite{Elga04:Defeating}
  and~\cite{Schwarz14:Belief}.  The example where she hopes that {\em today
    is Tuesday} then is analogous to the ``After the Train Crash'' case
  in~\cite{Hare07:Self}, where a victim of a train crash who has forgotten
  his name, upon learning that the victim named ``A'' will have to undergo
  painful surgery, hopes that {\em he is victim ``B''}.  (See
  also~\cite{Hare09:On,Hare10:Realism}.)}  So a myopic Beauty's preferences
are entirely {\em de se} and {\em de nunc}.  If we additionally suppose
that the game is additive as described above, then she need not worry at
all about what she will do or has done in other rounds (including about
what her current actions say about what she will do or has done in other
rounds), because none of those affect her current circumstances and
rewards.  Hence, it seems that here beliefs and actions should
unambiguously line up.  Unfortunately, such extreme assumptions also make
it difficult, and perhaps impossible, to set up an example that provides
much insight beyond non-decision-theoretic variants of the Sleeping Beauty
problem.  There is a tightrope to be walked here.  Too permissive a setup
will allow us to reach conclusions that are unwarranted; too restricted a
setup will not allow us to reach any conclusions at all.  Perhaps the best
we can hope for is to identify the happy medium and gradually accumulate
bits of evidence that, while each not entirely convincing on its own,
gradually tilt the balance in favor of one or the other
position.\footnote{Not all of these bits of evidence would concern decision
  variants, especially as surprising connections from the Sleeping Beauty
  problem to other problems continue to be drawn.  For
  example,~\cite{Pittard15:When} makes an interesting connection to
  epistemic implications of disagreement that provides a challenge to
  halfers (and argues that this challenge can be met).  Of course, there
  are also many direct probabilistic arguments. Many of these were already
  made early on in the debate about Sleeping
  Beauty~\citep[etc.]{Elga00:Self,Lewis01:Sleeping,Arntzenius02:Reflections,Dorr02:Sleeping},
  but new ones continue to be
  made~\citep[e.g.]{Titelbaum12:Embarrassment,Conitzer14:Devastating}.}

\section*{Acknowledgments}

I thank the anonymous reviewers for many useful comments that have helped
to significantly improve the paper.

\bibliography{beauty}
\bibliographystyle{plainnat}

\end{document}